\documentclass[a4paper]{article}

\usepackage{authblk}
\author{David Monniaux}
\affil{CNRS / VERIMAG}

\title{$\oracleclass{\NP}{\sharpP} = \existsclass{\PP}$ and other remarks about maximized counting}

\newcommand{\complexityclass}[1]{\mathsf{#1}}
\newcommand{\problem}[1]{\mathrm{#1}}
\newcommand{\Pclass}{\complexityclass{P}}
\newcommand{\nondetclass}[1]{\complexityclass{N}{#1}}
\newcommand{\NP}{\nondetclass{\Pclass}}
\newcommand{\majorityclass}[1]{\complexityclass{P}{#1}}
\newcommand{\PP}{\majorityclass{\Pclass}}
\newcommand{\counting}[1]{\#{#1}}
\newcommand{\existsclass}[1]{\exists{#1}}
\newcommand{\sharpP}{\counting{\Pclass}}
\newcommand{\oracleclass}[2]{{#1}^{#2}}
\newcommand{\limitedoracleclass}[3]{{#1}^{#2[#3]}}
\newcommand{\SAT}{\problem{SAT}}

\newcommand{\ve}[1]{\mathbf{#1}}
\newcommand{\countmodels}[1]{\counting\SAT(#1)}
\newcommand{\countset}[1]{\#{\left\{#1\right\}}}
\newcommand{\size}[1]{\left|#1\right|}

\usepackage{amsthm}
\theoremstyle{plain}
\newtheorem{lemma}{Lemma}
\newtheorem{theorem}{Theorem}
\newtheorem{corollary}{Corollary}
\theoremstyle{remark}
\newtheorem{remark}{Remark}
\theoremstyle{definition}
\newtheorem{definition}{Definition}

\usepackage[backend=bibtex]{biblatex}
\bibliography{counting_complexity}

\usepackage{hyperref}

\date{February 23, 2022}

\begin{document}
\maketitle

We consider the following decision problem $\problem{DMAX\#SAT}$, and generalizations thereof: given a quantifier-free propositional formula $F(\ve{x},\ve{y})$, where $\ve{x},\ve{y}$ are tuples of variables, and a bound $B$, determine if there is $\ve{x}$ such that $\countset{\ve{y} \mid F(\ve{x},\ve{y})} \geq B$.
This is the decision version of the problem of $\problem{MAX\#SAT}$: finding $\ve{x}$ and $B$ for maximal~$B$.

\begin{theorem}
  $\problem{DMAX\#SAT}$ is $\existsclass{\PP}$-complete.
\end{theorem}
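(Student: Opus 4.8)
The plan is to prove the two inclusions separately, exploiting that $\problem{DMAX\#SAT}$ already has the syntactic shape of an $\exists \cdot \PP$ problem: the outer search for $\ve{x}$ plays the role of the existential operator, and the inner threshold-counting over $\ve{y}$ plays the role of $\PP$.

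For membership in $\existsclass{\PP}$, I would take $\ve{x}$ as the existential certificate; it is a tuple of the formula's variables, hence of length polynomial in the input. Once $\ve{x}$ is fixed, I claim that deciding $\countset{\ve{y} \mid F(\ve{x},\ve{y})} \geq B$ is a $\PP$ predicate of the triple $(F,\ve{x},B)$: a probabilistic machine guesses $\ve{y}$ uniformly and accepts exactly on the satisfying assignments, so its number of accepting paths is $\countset{\ve{y} \mid F(\ve{x},\ve{y})}$ out of $2^{\size{\ve{y}}}$, and comparing this count against an arbitrary bound $B$ rather than the midpoint threshold $2^{\size{\ve{y}}-1}$ remains in $\PP$ by the standard padding trick that re-centres any threshold on the majority. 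Writing $\problem{DMAX\#SAT}$ as $\{(F,B) \mid \exists \ve{x}, (F,B,\ve{x}) \in K\}$ with $K \in \PP$ then places it in $\existsclass{\PP}$.

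For hardness I would start from an arbitrary $L \in \existsclass{\PP}$ and unfold both operators into the canonical form $x \in L \iff \exists \ve{w} \in \{0,1\}^{p(\size{x})} : \countset{\ve{z} \in \{0,1\}^{m} \mid R(x,\ve{w},\ve{z})} > 2^{m-1}$, where $R$ is polynomial-time, the midpoint threshold comes from the definition of $\PP$, and padding makes both $\size{\ve{w}}$ and $m$ functions of $\size{x}$ alone. The reduction hard-wires $x$ and applies a parsimonious Cook--Levin encoding of the polynomial-time predicate $(\ve{w},\ve{z}) \mapsto R(x,\ve{w},\ve{z})$, yielding a quantifier-free formula $F(\ve{w},(\ve{z},\ve{a}))$ with auxiliary variables $\ve{a}$ such that, for each $(\ve{w},\ve{z})$, there is exactly one satisfying $\ve{a}$ when $R(x,\ve{w},\ve{z})$ holds and none otherwise. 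Identifying $\ve{w}$ with the maximized variables and $(\ve{z},\ve{a})$ with the counted variables gives $\countset{(\ve{z},\ve{a}) \mid F(\ve{w},\ve{z},\ve{a})} = \countset{\ve{z} \mid R(x,\ve{w},\ve{z})}$ for each fixed $\ve{w}$, so setting $B = 2^{m-1}+1$ converts the strict majority into a $\geq B$ test and yields $(F,B) \in \problem{DMAX\#SAT} \iff x \in L$ by a polynomial-time map.

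The step I expect to demand the most care is the parsimony of the Cook--Levin encoding, since the crucial identity $\countset{(\ve{z},\ve{a}) \mid F} = \countset{\ve{z} \mid R}$ depends on each accepted $(\ve{w},\ve{z})$ contributing exactly one full assignment; I must therefore ensure the auxiliary variables $\ve{a}$ are functionally determined by $(\ve{w},\ve{z})$ and neither inflate nor collapse the count. A secondary technical point, shared with the membership argument, is the normalization of the $\PP$ threshold to the midpoint $2^{m-1}$ of a power-of-two search space, which is what lets the bound be taken as the clean value $B = 2^{m-1}+1$.
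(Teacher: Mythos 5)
Your proposal is correct and follows essentially the same route as the paper: membership via the well-known fact that threshold model counting is in $\PP$, and hardness by hard-wiring $x$, turning the polynomial-time inner computation into a quantifier-free formula whose auxiliary variables (your $\ve{a}$, the paper's $z'$) are absorbed into the counted block, with the existential bits becoming the maximized variables. Your treatment is in fact more explicit than the paper's on the two points it glosses over --- the parsimony of the Cook--Levin encoding and the normalization of the $\PP$ threshold --- and correctly identifies these as the steps needing care.
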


\begin{proof}
  It is in $\existsclass{\PP}$: it is well-known that taking $(F,B)$ as input and checking if $\countmodels{F} \geq B$ is in~$\PP$.
  
  Take a problem in $\existsclass{\PP}$.
  It can be reformulated as: take input $x$,
  choose nondeterministic bits $y$, construct a formula $F(x,y)$ with $N(x,y)$ variables, and check that
  $\#\{z \in \{0,1\}^{N(x,y)} \mid F(x,y)(z)\} \geq 2^{N(x,y)-1}$.
  The condition $F(x,y)(z)$ can also be reformulated as $\exists z'~G(x,y,z,z')$ where $G$ simulates the action of the Turing machine that produces $F$ (if necessary by using temporary values in $z'$) then the semantics of the formula over~$z$.
  The result follows.
\end{proof}

\textcite[theorem 4.1 (ii)]{DBLP:journals/jacm/Toran91} showed that $\existsclass{\PP} = \oracleclass{\NP}{\sharpP}$; actually, a generalization of this.
However, prior to becoming aware of that result, we had worked out another proof, which we present here.

The following gadgets enables us to transform multiple equality tests over model counts $\countmodels{F_1}=C_1 \land \dots \land \countmodels{F_m}=C_m$ into a single equality test over model counts.

\begin{definition}
Let $F$ and $G$ be two quantifier-free propositional formulas with $m$ and $n$ variables respectively. Without loss of generality, we assume these variables to be $x_1,\dots,x_m$ and $x_1,\dots,x_n$.
Let $\phi_2^{m,n}(F,G)$ be the following formula over $m+n+2$ variables:
\begin{equation}
  \left(F(x_1,\dots,x_m) \land \neg{x_{m+1}} \land \dots \land \neg{x_{m+n+2}}\right) \lor
  \left(G(x_1,\dots,x_m) \land x_{m+1} \right)
\end{equation}
By $\size{F}$ we denote the size of a formula as the number of its Boolean operators, and by $\countmodels{F}$ we denote the number of its models. 
\end{definition}

\begin{lemma}
  $\size{\phi_2^{m,n}(F,G)} = \size{F} + \size{G} + n+ 3$. Furthermore,
  $\countmodels{F}$ and $\countmodels{G}$ are respectively the remainder and quotient of $\countmodels{\phi_2^{m,n}(F,G)}$ by $2^{n+1}$.
\end{lemma}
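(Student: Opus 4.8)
The plan is to reduce everything to a single model count of $\phi_2^{m,n}(F,G)$ computed disjunct by disjunct. The structural observation I would start from is that $x_{m+1}$ acts as a selector: the first disjunct entails $\neg x_{m+1}$ and the second entails $x_{m+1}$, so the two disjuncts are satisfied by disjoint families of assignments. Hence over the $m+n+2$ variables the models of $\phi_2^{m,n}(F,G)$ partition into the models of each disjunct, and $\countmodels{\phi_2^{m,n}(F,G)}$ is simply the sum of the two counts, which I would compute separately.

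For the first disjunct $F(x_1,\dots,x_m)\land\neg x_{m+1}\land\dots\land\neg x_{m+n+2}$, all of $x_{m+1},\dots,x_{m+n+2}$ are pinned to $0$ and only $x_1,\dots,x_m$ remain, constrained by $F$; its satisfying assignments thus biject with the models of $F$, so it contributes exactly $\countmodels{F}$, with no free padding. For the second disjunct $G(x_1,\dots,x_m)\land x_{m+1}$, the selector is fixed to $1$ and $x_1,\dots,x_m$ are governed by $G$, while the $n+1$ variables $x_{m+2},\dots,x_{m+n+2}$ are left entirely free; this disjunct therefore contributes $2^{n+1}\countmodels{G}$. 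Summing,
\begin{equation}
  \countmodels{\phi_2^{m,n}(F,G)} = \countmodels{F} + 2^{n+1}\,\countmodels{G}.
\end{equation}

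To finish the arithmetic claim I would check the bound $0\le\countmodels{F}<2^{n+1}$, i.e.\ that the model count of $F$ occupies only the $n+1$ low-order bits; the displayed identity is then exactly the Euclidean division of $\countmodels{\phi_2^{m,n}(F,G)}$ by $2^{n+1}$, reading off $\countmodels{F}$ as the remainder and $\countmodels{G}$ as the quotient. The size identity I would settle by a direct tally of the Boolean connectives that $\phi_2^{m,n}(F,G)$ adds on top of $F$ and $G$: the top-level disjunction, the conjunction with the selector in the second disjunct, and the conjunctions attaching the padding literals to $F$ in the first; collecting them gives the stated $\size{F}+\size{G}+n+3$.

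The two model counts are routine once the partition is in place; the step demanding real care is the free-versus-pinned bookkeeping that yields the clean factor $2^{n+1}$, where one must be sure the padding variables are forced in the $F$-branch yet free in the $G$-branch and that no variable is simultaneously used as padding and as a genuine argument. The only substantive hypothesis is $\countmodels{F}<2^{n+1}$: without it the remainder would collapse to $\countmodels{F}\bmod 2^{n+1}$ and the quotient would pick up a spurious carry, so isolating exactly why the count of $F$ stays below $2^{n+1}$ is the point I expect to be the main obstacle.
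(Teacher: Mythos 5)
Your overall strategy---split the models of $\phi_2^{m,n}(F,G)$ along the selector $x_{m+1}$, count the two disjuncts separately, and finish by Euclidean division---is the right one, and it is surely the argument the author intended (the paper states this lemma with no proof at all). But the obstacle you single out at the end is not a step you could ever fill in: under the reading you adopted, the statement is false, and the root cause is a typo in the paper's definition. $G$ is a formula over $n$ variables, so it cannot be applied to $x_1,\dots,x_m$; the second disjunct must be read as $G(x_1,\dots,x_n) \land x_{m+1}$. With that reading, the variables left free in the $G$-branch are $x_{n+1},\dots,x_m$ together with $x_{m+2},\dots,x_{m+n+2}$, which are $m+1$ in number, not $n+1$, so the correct identity is
\begin{equation}
  \countmodels{\phi_2^{m,n}(F,G)} = \countmodels{F} + 2^{m+1}\,\countmodels{G},
\end{equation}
and the lemma should assert that $\countmodels{F}$ and $\countmodels{G}$ are the remainder and quotient of the division by $2^{m+1}$, not $2^{n+1}$. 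After this correction, the bound you could not establish becomes automatic: $F$ has $m$ variables, hence $0 \leq \countmodels{F} \leq 2^m < 2^{m+1}$. The bound you were hunting for, $\countmodels{F} < 2^{n+1}$, genuinely fails in the paper's own use of the gadget: in the recursive definition of $\phi_k^n$, the first argument has $m = kn+2(k-1)$ variables, so taking $F$ to be (say) a tautology gives $\countmodels{F} = 2^m \geq 2^{n+1}$ as soon as $m \geq n+1$. Your instinct that this is ``the main obstacle'' was correct; the resolution is to repair the statement, not to look for a cleverer proof.

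Two further remarks. Your disjointness observation (the first disjunct entails $\neg x_{m+1}$, the second entails $x_{m+1}$) is exactly what makes the two counts add, and the computation of the $F$-branch contribution as exactly $\countmodels{F}$ is right; keep both. However, your handling of the size claim is an assertion, not a proof: you say the tally ``gives the stated $\size{F}+\size{G}+n+3$'' without performing it, and an honest tally does not give $n+3$. Joining $F$ to the $n+2$ negated padding literals takes $n+2$ conjunctions (plus $n+2$ negations, which the paper's definition of $\size{\cdot}$ as the number of Boolean operators arguably includes), the second disjunct adds one conjunction, and the top-level disjunction one more, so the overhead is at least $n+4$. Neither discrepancy matters for how the lemma is used later---all that is needed is that $\phi_2^{m,n}(F,G)$ has size polynomial in $\size{F}+\size{G}+n$ and that $\countmodels{F}$ and $\countmodels{G}$ are recoverable in polynomial time from $\countmodels{\phi_2^{m,n}(F,G)}$---but a correct write-up should state the corrected modulus $2^{m+1}$ and the corrected size bound explicitly rather than leave the unprovable bound $\countmodels{F} < 2^{n+1}$ as an open ``obstacle''.
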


% \begin{proof}
%   The size is obvious. For the number of models, each model of $F$ maps to one model of $\phi_2^{m,n}(F,G)$, and each model of $G$ to $2^{n+1}$ models of $\phi_2^{m,n}(F,G)$; these two disjoint sets make up the models of $\phi_2^{m,n}(F,G)$.
%   Quotient and remainder follow from $0 \leq \countmodels{G} \leq 2^n < 2^{n+1}$.
% \end{proof}

\begin{definition}
Let $F_0,\dots$ be propositional formulas with $n$ variables.
Let $\phi_1^n(F_0)=F_0$, $\phi_{k+1}^n(F_0,\dots,F_k) = \phi_2^{kn+2(k-1),n}(\phi_k(F_0,\dots,F_{k-1}), F_k)$.
\end{definition}

\begin{lemma}\label{ref:count_base_decomposition}
  $\size{\phi_k^n(F_0,\dots,F_{k-1})} = \sum_i \size{F_i} + (k-1)(n+3)$.
Furthermore, $\countmodels{F_i}$ is the digit of order $i$ (starting with $i=0$) of the decomposition of $\countmodels{\phi_2^{m,n}(F,G)}$ in base $2^{n+1}$.
\end{lemma}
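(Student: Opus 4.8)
The plan is to prove both assertions simultaneously by induction on $k$, with the preceding lemma on $\phi_2^{m,n}$ serving as the engine of the inductive step. Before starting, I would record by a short parallel induction that $\phi_k^n(F_0,\dots,F_{k-1})$ has exactly $kn+2(k-1)$ variables: this is what legitimizes the superscript $kn+2(k-1)$ in the definition of $\phi_{k+1}^n$, ensuring that the first argument of $\phi_2^{kn+2(k-1),n}$ really is a formula on $kn+2(k-1)$ variables, so that the $\phi_2$ lemma applies verbatim.

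For the size identity, the base case $k=1$ holds because $\phi_1^n(F_0)=F_0$ and the right-hand side reduces to $\size{F_0}$. For the inductive step I would unfold $\phi_{k+1}^n(F_0,\dots,F_k)=\phi_2^{kn+2(k-1),n}(\phi_k^n(F_0,\dots,F_{k-1}),F_k)$, apply the size part of the $\phi_2$ lemma to rewrite its size as $\size{\phi_k^n(F_0,\dots,F_{k-1})}+\size{F_k}+n+3$, and then substitute the induction hypothesis $\sum_{i<k}\size{F_i}+(k-1)(n+3)$. Collecting terms gives $\sum_{i\le k}\size{F_i}+k(n+3)$, which is exactly the claim at $k+1$.

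For the digit claim, the base case is that $\phi_1^n(F_0)=F_0$ has $\countmodels{F_0}\le 2^n<2^{n+1}$, so $\countmodels{F_0}$ is its order-$0$ digit in base $2^{n+1}$ with no higher digits. For the step I would invoke the quotient/remainder part of the $\phi_2$ lemma: applied with $F=\phi_k^n(F_0,\dots,F_{k-1})$ and $G=F_k$, it peels off $\countmodels{F_k}$ as the high-order block and leaves $\countmodels{\phi_k^n(F_0,\dots,F_{k-1})}$ as the lower part. Feeding the induction hypothesis into the latter identifies its digits as $\countmodels{F_0},\dots,\countmodels{F_{k-1}}$, and $\countmodels{F_k}$ then becomes the order-$k$ digit, completing the induction.

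The main obstacle is the bookkeeping that makes this peeling \emph{exact}: I must check that the lower part contributed by $\phi_k^n$ occupies precisely the digit positions $0,\dots,k-1$ and never spills into position $k$, and dually that the block carrying $\countmodels{F_k}$ sits at position $k$ with no overlap. This reduces to verifying that each $\countmodels{F_i}<2^{n+1}$ (so it is a genuine base-$2^{n+1}$ digit) together with tracking the variable counts through the recursion, so that the place value separated off at each level is the correct power of $2^{n+1}$. Once the variable-count invariant $kn+2(k-1)$ is in hand, this is the only point where care is genuinely needed; everywhere else the two inductions are routine.
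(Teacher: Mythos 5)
Your skeleton --- a double induction on $k$, with a parallel induction giving the variable-count invariant $v_k = kn+2(k-1)$ --- is certainly the intended argument (the paper states this lemma without proof), and your treatment of the size identity is correct. The gap is in the digit claim, and it sits exactly at the point you flagged as ``the only point where care is genuinely needed'' and then deferred: when you actually carry out that verification, it fails. What your inductive step needs is
\begin{equation}
\countmodels{\phi_{k+1}^n(F_0,\dots,F_k)} \;=\; \countmodels{\phi_k^n(F_0,\dots,F_{k-1})} \;+\; \bigl(2^{n+1}\bigr)^{k}\,\countmodels{F_k},
\end{equation}
i.e.\ peeling by $(2^{n+1})^{k}$. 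The preceding lemma, taken literally, gives peeling by $2^{n+1}$: applied with $F=\phi_k^n(\dots)$ and $G=F_k$ it would make $\countmodels{F_k}$ the \emph{order-one} digit, not the order-$k$ digit, so your step is a non sequitur even as a black-box application (and its remainder condition $\countmodels{\phi_k^n(\dots)}<2^{n+1}$ is false for $k\geq 2$ anyway). If instead you compute what the gadget $\phi_2^{m,n}(F,G)$ actually does, the $G$-disjunct constrains only the $n$ variables of $G$ and forces $x_{m+1}$, leaving $m+1$ variables free, so $\countmodels{\phi_2^{m,n}(F,G)} = \countmodels{F} + 2^{m+1}\countmodels{G}$: the peeling is by $2^{m+1}$, $m$ being the number of variables of the \emph{first} argument. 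In your inductive step $m = v_k$, so $\countmodels{F_k}$ enters at place value $2^{v_k+1} = 2^{kn+2k-1}$, whereas order $k$ in base $2^{n+1}$ requires $2^{kn+k}$; these agree only for $k=1$. Concretely, $\countmodels{\phi_3^n(F_0,F_1,F_2)} = \countmodels{F_0} + 2^{n+1}\countmodels{F_1} + 2^{2n+3}\countmodels{F_2}$, whose order-$2$ digit in base $2^{n+1}$ is $2\countmodels{F_2} \bmod 2^{n+1}$, not $\countmodels{F_2}$.

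The conclusion is that no amount of routine bookkeeping closes your step: with the definitions as given, the digit claim is false for $k\geq 3$, so a correct write-up must repair the construction, not just verify it. Two repairs work. (i) Modify the gadget so that the disjunct carrying $F_k$ leaves exactly $k(n+1)$ variables free, by additionally forcing $k-1$ of the spare variables to false; the variable count $v_{k+1}=v_k+n+2$ is preserved and each level then multiplies $\countmodels{F_k}$ by exactly $(2^{n+1})^k$. (ii) Swap the arguments in the recursion, setting $\phi_{k+1}^n(F_0,\dots,F_k) = \phi_2^{n,v_k}(F_k,\phi_k^n(F_0,\dots,F_{k-1}))$, which gives $\countmodels{\phi_{k+1}^n(\dots)} = \countmodels{F_k} + 2^{n+1}\countmodels{\phi_k^n(\dots)}$ and hence genuine base-$2^{n+1}$ digits, with $\countmodels{F_i}$ appearing at order $k-1-i$ (reversed, which is harmless for the application). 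Your variable-count invariant is exactly the tool that exposes this mismatch; the proof should use it to confront the exponent $kn+2k-1$ against $k(n+1)$ rather than assert that they align.
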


The following gadget will be used to add a number of models to an existing formula:
\begin{definition}
Let $M^n_c(x_0,\dots,x_{n-1})$, where $0 \leq c \leq 2^n$, be the formula that specifies that $\sum_i 2^i x_i \leq c$.
\end{definition}

\begin{lemma}
  $\size{M^n_c}$ is linear in $n$, and $\countmodels{M^n_c}=c$.
\end{lemma}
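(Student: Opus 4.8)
The plan is to build $M^n_c$ explicitly by recursion on the bits of $c$, read from the most significant downward, so that the formula mimics the lexicographic comparison of the bit string $x_{n-1}\cdots x_0$ against that of $c$. The organizing observation is that $(x_0,\dots,x_{n-1}) \mapsto \sum_i 2^i x_i$ is a bijection from $\{0,1\}^n$ onto $\{0,\dots,2^n-1\}$, so counting the models of $M^n_c$ amounts to counting the integers in that range that meet the arithmetic constraint.

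Writing $c$ in binary as $c_{n-1}\cdots c_0$, I would define $M^n_c$ by recursion on $n$, splitting on the top variable $x_{n-1}$: when $c_{n-1}=1$ take $M^n_c = \neg x_{n-1} \lor (x_{n-1} \land M^{n-1}_{c \bmod 2^{n-1}})$, and when $c_{n-1}=0$ take $M^n_c = \neg x_{n-1} \land M^{n-1}_{c}$, the inner formula now ranging over $x_0,\dots,x_{n-2}$; the base case $n=0$ is a Boolean constant. The point of this recursive shape, as opposed to a naive disjunction over ``the first bit at which $x$ and $c$ disagree'', is that each level contributes at most three Boolean operators ($\neg$, $\land$, $\lor$), so a one-line induction gives $\size{M^n_c} \le 3n + O(1)$: linear in $n$, and in particular independent of the magnitude of $c$.

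For the count I would argue by the same induction that the models of $M^n_c$ are exactly the assignments whose integer value lies in the intended range. When $c_{n-1}=1$, every assignment with $x_{n-1}=0$ satisfies the formula, contributing $2^{n-1}$ models (the integers below $2^{n-1}$), while those with $x_{n-1}=1$ contribute precisely the models of $M^{n-1}_{c \bmod 2^{n-1}}$ shifted up by $2^{n-1}$; adding the two and invoking the induction hypothesis yields the claimed total. When $c_{n-1}=0$ the top variable is forced to $0$ and the instance reduces directly to the smaller one. Unwinding the recurrence recovers the bijection statement above.

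Both inductions are routine; the one genuinely delicate point is the boundary convention. To obtain $\countmodels{M^n_c}=c$ exactly, rather than $c+1$, the comparison that $M^n_c$ encodes must be strict, i.e. $\sum_i 2^i x_i < c$ (equivalently $\le c-1$); with that reading the extreme cases $c=0$ and $c=2^n$ check out against the constantly false and constantly true formulas, and the recursion above produces the count on the nose. This is the step I would pin down first, since everything else is bookkeeping.
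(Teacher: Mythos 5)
Your proof is correct, and it actually supplies something the paper omits: the paper states this lemma with no proof at all, treating the comparator gadget as standard. Your construction---recursing on the most significant bit, with $M^n_c = \neg x_{n-1} \lor (x_{n-1} \land M^{n-1}_{c \bmod 2^{n-1}})$ when $c_{n-1}=1$ and $M^n_c = \neg x_{n-1} \land M^{n-1}_{c}$ when $c_{n-1}=0$---is the canonical realization, and both of your inductions (size at most $3n+O(1)$, hence linear in $n$ and independent of $c$; model count via the split on $x_{n-1}$) go through as stated. The most valuable part of your write-up is the boundary observation: as literally written, the paper's definition requires $\sum_i 2^i x_i \leq c$, under which the formula would have $c+1$ models for $c < 2^n$, contradicting the lemma's claim that $\countmodels{M^n_c} = c$. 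Your strict reading $\sum_i 2^i x_i < c$ (equivalently $\leq c-1$) is the one that makes the lemma true, handles the endpoints $c=0$ (constant false) and $c=2^n$ (constant true) correctly, and is also what the rest of the paper needs: the proof of Lemma~\ref{lem:model_count_eq_to_ineq} requires $\countmodels{M^n_{2\Delta}} = 2\Delta$ exactly, so that $\countmodels{\psi^n(F)} = \countmodels{F}\,(2^n - \countmodels{F} + 2\Delta) = K^n_\Delta(\countmodels{F})$. So the off-by-one you pinned down is a flaw in the paper's definition, not in your argument; your proof, with that convention made explicit, is complete.
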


The following gadget will be used to turn an equality test on the number of models of a formula into a ``greater than or equality'' inequality test on the number of models of another formula:

\begin{definition}
Let $F$ be a propositional formula over $n$ variables, and $0 \leq \Delta \leq 2{n-1}$.
Let $\psi^n(F)$ be the formula over $2n+1$ variables
\begin{equation}
  F(x_1,\dots,x_n) \land \left(
  \left(\neg F(x_{n+1},\dots,x_{2n}) \land \neg x_{2n+1} \right) \lor
  \left(M^n_{2\Delta}(x_{n+1},\dots,x_{2n}) \land x_{2n+1}\right)\right)
\end{equation}
Let $K^n_\Delta$ be the polynomial $K^n_\Delta(X) = X(2^n-X+2\Delta)$.
\end{definition}

\begin{lemma}\label{lem:model_count_eq_to_ineq}
  $\size{\psi^n(F)}$ has size linear in $\size{F}$, and
  $\countmodels{\psi^n(F)} = K^n_\Delta(\countmodels{F})$.
  Furthermore, $K^n_\Delta(\countmodels{F}) \geq K^n_\Delta(2^{n-1}+\Delta)$
  if and only if $\countmodels{F}=2^{n-1}+\Delta$.
\end{lemma}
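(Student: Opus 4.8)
The plan is to verify the three assertions in turn, treating the first two as bookkeeping and reserving the real idea for the last.

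For the size bound I would observe that $\psi^n(F)$ is assembled from two syntactic copies of $F$ (one occurring under a negation), one copy of $M^n_{2\Delta}$, and a bounded number of extra Boolean connectives. Since $\size{M^n_{2\Delta}}$ is linear in $n$ by the preceding lemma, the total is $2\size{F} + O(n)$, which is linear in $\size{F}$ under the standing assumption that the input size dominates $n$ (e.g.\ that $F$ mentions each of its $n$ variables, so that $n = O(\size{F})$).

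For the model count, write $c = \countmodels{F}$. The key observation is that the three variable blocks $x_1,\dots,x_n$, then $x_{n+1},\dots,x_{2n}$, and finally $x_{2n+1}$ are pairwise disjoint, so the outer conjunction factors: an assignment satisfies $\psi^n(F)$ exactly when $(x_1,\dots,x_n)$ is one of the $c$ models of $F$ \emph{and} the remaining $n+1$ variables satisfy the bracketed subformula. Its two disjuncts are mutually exclusive, being separated by the literal on $x_{2n+1}$: the first is met by the $2^n-c$ non-models of $F$ (with $x_{2n+1}=0$), the second by the $2\Delta$ models of $M^n_{2\Delta}$ (with $x_{2n+1}=1$), for $(2^n-c)+2\Delta$ completions in all. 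Here the hypothesis $0\le\Delta\le 2^{n-1}$ is what guarantees $0\le 2\Delta\le 2^n$, so that $M^n_{2\Delta}$ is well-defined. Multiplying the two independent counts yields $c\,(2^n-c+2\Delta) = K^n_\Delta(c)$, as claimed.

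The substantive point is the final equivalence, and here the idea is simply that $K^n_\Delta$ has been engineered to be a strictly concave quadratic. Writing $K^n_\Delta(X) = -X^2 + (2^n+2\Delta)X$, I would complete the square (or differentiate) to locate its unique maximum at $X = (2^n+2\Delta)/2 = 2^{n-1}+\Delta$. Strict concavity then gives $K^n_\Delta(X) \le K^n_\Delta(2^{n-1}+\Delta)$ for every real $X$, with equality precisely at the vertex; hence the inequality $K^n_\Delta(\countmodels{F}) \ge K^n_\Delta(2^{n-1}+\Delta)$ can hold only as an equality, which forces $\countmodels{F} = 2^{n-1}+\Delta$, and the converse is immediate. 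I anticipate no serious obstacle: the only things to keep honest are the disjointness/independence bookkeeping in the counting step and the recognition that the whole construction is designed so that a ``$\ge$ threshold'' test on $\countmodels{\psi^n(F)}$ detects exactly equality of $\countmodels{F}$ to the maximizing argument $2^{n-1}+\Delta$.
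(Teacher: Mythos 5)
Your proof is correct and is exactly the argument the construction is designed for (the paper states this lemma without giving a proof): factor the model count over the disjoint variable blocks, using mutual exclusivity on $x_{2n+1}$ and $\countmodels{M^n_{2\Delta}}=2\Delta$, to get $\countmodels{\psi^n(F)} = K^n_\Delta(\countmodels{F})$, then invoke strict concavity of the quadratic $K^n_\Delta$ with vertex at $2^{n-1}+\Delta$ to turn the threshold test into an equality test. Your caveat that the size bound needs $n = O(\size{F})$ (or that $n$ be counted as part of the input) is a fair point that the paper's statement glosses over.
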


\begin{theorem}
  $\existsclass{\PP} = \limitedoracleclass{\NP}{\PP}{1} = \oracleclass{\NP}{\PP} = \oracleclass{\NP}{\sharpP}$.
\end{theorem}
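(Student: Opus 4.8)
The plan is to prove that all four classes coincide by establishing the cycle of inclusions
\[
\existsclass{\PP} \subseteq \limitedoracleclass{\NP}{\PP}{1} \subseteq \oracleclass{\NP}{\PP} \subseteq \oracleclass{\NP}{\sharpP} \subseteq \existsclass{\PP}.
\]
Three of these are routine. For $\existsclass{\PP} \subseteq \limitedoracleclass{\NP}{\PP}{1}$, a language in $\existsclass{\PP}$ has the form $x \in L \iff \exists y\, (x,y) \in A$ with $A \in \PP$, so the $\NP$ machine guesses $y$ and settles $(x,y) \in A$ with a single $\PP$ query. The inclusion $\limitedoracleclass{\NP}{\PP}{1} \subseteq \oracleclass{\NP}{\PP}$ is immediate, and $\oracleclass{\NP}{\PP} \subseteq \oracleclass{\NP}{\sharpP}$ holds because any $\PP$ query ``$\countmodels{F} \geq B$'' is answered by one $\sharpP$ query for $\countmodels{F}$ followed by a comparison.

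The substance is the closing inclusion $\oracleclass{\NP}{\sharpP} \subseteq \existsclass{\PP}$, which I would obtain by first proving $\oracleclass{\NP}{\sharpP} \subseteq \limitedoracleclass{\NP}{\PP}{1}$ and composing with the trivial $\limitedoracleclass{\NP}{\PP}{1} \subseteq \existsclass{\PP}$. Fix $L \in \oracleclass{\NP}{\sharpP}$ computed by a polynomial-time nondeterministic machine $M$ with a $\counting\SAT$ oracle (which we may assume by $\sharpP$-completeness of $\counting\SAT$). On input $x$, the simulating machine guesses nondeterministically both an accepting computation path of $M$ and, for each of the polynomially many oracle calls along it, the claimed answer $a_i$. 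Guessing all answers in advance removes adaptivity: the path together with the $a_i$ determines the computation deterministically, so the actual query formulas $F_0,\dots,F_{t-1}$ can be reconstructed and the claim ``$M$ accepts with these answers'' verified in polynomial time. It then remains only to certify that the guesses were honest, i.e.\ that $\bigwedge_i \countmodels{F_i} = a_i$.

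This conjunction of model-count equalities is precisely what the gadgets collapse to a single $\PP$ query. After padding each $F_i$ to a common number of variables $n$ without changing its count (conjoin fresh, forced-false variables), I would apply $\phi_t^n$ so that, by Lemma~\ref{ref:count_base_decomposition}, $\countmodels{F_i}$ is the digit of order $i$ in base $2^{n+1}$ of $\countmodels{\phi_t^n(F_0,\dots,F_{t-1})}$; since each $a_i \leq 2^n < 2^{n+1}$, the whole conjunction is equivalent to the single equality $\countmodels{\Phi} = C$ with $C = \sum_i a_i (2^{n+1})^i$, where $\Phi$ has some polynomial number $N$ of variables. Using the $M$-gadget over a fresh selector variable to add $M^N_{2^N - C}$ as a disjoint block, I obtain $\Phi'$ over $N' = N+1$ variables with $\countmodels{\Phi'} = \countmodels{\Phi} + (2^N - C)$, so the target becomes $2^N = 2^{N'-1}$. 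Lemma~\ref{lem:model_count_eq_to_ineq} with $\Delta = 0$ then converts the equality $\countmodels{\Phi'} = 2^{N'-1}$ into the single inequality $\countmodels{\psi^{N'}(\Phi')} \geq K^{N'}_0(2^{N'-1})$, which is one $\PP$ query; this places $L$ in $\limitedoracleclass{\NP}{\PP}{1}$.

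The main obstacle is this closing inclusion, and within it the interface between two devices: the standard guess-and-verify technique that collapses many adaptive $\sharpP$ answers into a batch of exact equality tests, and the gadget chain that compresses that batch into one threshold count. The delicate bookkeeping is ensuring the numerical bounds line up so that each gadget lemma applies verbatim and all sizes stay polynomial: that padded counts remain below the radix $2^{n+1}$, that the offset $2^N - C$ is a legal model count ($0 \leq 2^N - C \leq 2^N$), and that the shifted target lies in the range $[2^{N'-1}, 2^{N'}]$ demanded by $\psi^{N'}$.
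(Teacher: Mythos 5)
Your proposal is correct and follows essentially the same route as the paper: guess all oracle queries and answers up front to remove adaptivity, merge the batch of equality tests $\countmodels{F_i}=a_i$ into a single equality via $\phi_t^n$ (Lemma~\ref{ref:count_base_decomposition}), and convert that equality into one threshold query via $\psi$ (Lemma~\ref{lem:model_count_eq_to_ineq}), yielding an $\exists$-guess followed by a single $\PP$ test. The one point of divergence is the equality-to-inequality step: the paper splits into the cases $Y \geq 2^{n-1}$ and $Y < 2^{n-1}$, taking $\Delta = \size{Y - 2^{n-1}}$ and negating the formula in the second case, whereas you pre-shift the count by attaching a disjoint $M^N_{2^N-C}$ block so that the target becomes exactly $2^{N'-1}$ and Lemma~\ref{lem:model_count_eq_to_ineq} applies with $\Delta = 0$; this is equally valid, avoids the case analysis and the complementation trick, and your explicit bookkeeping (padding to a common variable count, $a_i < 2^{n+1}$, $0 \leq 2^N - C \leq 2^N$) fills in details the paper leaves implicit.
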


\begin{proof}
  Inclusions from left to right are trivial. We shall now prove that $\oracleclass{\NP}{\sharpP}$ is included in $\existsclass{\PP}$ by transforming a nondeterministic Turing machine $M$ deciding a problem $D(x)$ in time $P(\size{x})$ with a $\counting\SAT$ oracle into an equivalent decision procedure in $\existsclass{\PP}$.

  We proceed in steps:
  \begin{enumerate}
  \item $M$ calls the oracle at most $P(\size{x})$ times, over formulas of $P(\size{x})$ variables, and the outputs of each oracle call may be used for computing the inputs to further oracle calls.
    Instead, we transform the machine to nondeterministically choose all inputs $F_1, \dots, F_{P(\size{x})}$ and candidate outputs to the oracle calls ($\counting\SAT$), and then only at the end we verify that the candidate outputs $C_1,\dots,C_{P(\size{x})}$ match the real outputs $\countmodels{F_1},\dots,\countmodels{F_{P(\size{x})}}$ (we reject otherwise).
  \item We replace these calls by a single call to $\countmodels{\phi^{P(\size{x})}_{P(\size{x})}(F_1, \dots, F_{P(\size{x})})}$, and a single verification that the output $V$ of this call matches the candidate outputs $C_1,\dots,C_{P(\size{x})}$ according to the decomposition in Lemma~\ref{ref:count_base_decomposition}.
      \item We replace this call and equality test $\countmodels{\phi^{P(\size{x})}_{P(\size{x})}(F_1, \dots, F_{P(\size{x})})} = Y$ by a single call to $\counting\SAT$ and an inequality test as follows.
        Let $n$ be the number of variables of $\phi^{P(\size{x})}_{P(\size{x})}(F_1, \dots, F_{P(\size{x})})$. Two cases:
        \begin{itemize}
        \item $Y \geq 2^{n-1}$, then write $Y = 2^{n-1} + \Delta$.
          Then we replace the equality test by an inequality test
          $\countmodels{\psi^n(\phi^{P(\size{x})}_{P(\size{x})}(F_1, \dots, F_{P(\size{x})}))} \geq K^n_\Delta$, according to Lemma~\ref{lem:model_count_eq_to_ineq}.
        \item $Y < 2^{n-1}$, then write $Y = 2^{n-1} - \Delta$.
          Then we replace the equality test by an inequality test
          $\countmodels{\psi^n(\neg \phi^{P(\size{x})}_{P(\size{x})}(F_1, \dots, F_{P(\size{x})}))} \geq K^n_\Delta$, according to Lemma~\ref{lem:model_count_eq_to_ineq}.
        \end{itemize}
      \item We have thus reduced the procedure to the nondeterministic (possiblyfailing) construction of a pair $(G,B)$ followed by a test $\countmodels{G} \geq B$.
        This test is well-known to be complete for~$\PP$.
  \end{enumerate}
\end{proof}

\begin{corollary}
  $\problem{DMAX\#SAT}$ is hard for the polynomial hierarchy.
\end{corollary}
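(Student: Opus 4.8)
The plan is to obtain this hardness result as an immediate consequence of the two theorems above together with Toda's theorem. Theorem~1 establishes that $\problem{DMAX\#SAT}$ is complete for $\existsclass{\PP}$ under polynomial-time many-one reductions, so it suffices to show that every language of the polynomial hierarchy $\complexityclass{PH}$ lies in $\existsclass{\PP}$; the claimed hardness then follows by composing reductions.

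To establish $\complexityclass{PH} \subseteq \existsclass{\PP}$, I would invoke Toda's theorem, which states $\complexityclass{PH} \subseteq \oracleclass{\Pclass}{\sharpP}$. Chaining inclusions, $\complexityclass{PH} \subseteq \oracleclass{\Pclass}{\sharpP} \subseteq \oracleclass{\NP}{\sharpP}$, the latter because a deterministic polynomial-time oracle machine is in particular a nondeterministic one, so $\Pclass \subseteq \NP$ relativizes to any oracle. By Theorem~2 we have $\oracleclass{\NP}{\sharpP} = \existsclass{\PP}$, which closes the chain and yields $\complexityclass{PH} \subseteq \existsclass{\PP}$.

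Finally I would conclude the hardness directly. Fix an arbitrary $L \in \complexityclass{PH}$. By the inclusion just proved, $L \in \existsclass{\PP}$, and by the many-one completeness of $\problem{DMAX\#SAT}$ for $\existsclass{\PP}$ (Theorem~1) there is a polynomial-time many-one reduction from $L$ to $\problem{DMAX\#SAT}$. Since $L$ was arbitrary, $\problem{DMAX\#SAT}$ is hard for $\complexityclass{PH}$ under polynomial-time many-one reductions.

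This argument has no genuinely hard step of its own: all the combinatorial content is already packaged in the gadget lemmas and in Theorem~2, and the only external input is Toda's theorem. The one point that warrants care is the bookkeeping of reduction notions, since Toda's theorem is a statement about Turing containment of whole complexity classes whereas the completeness of $\problem{DMAX\#SAT}$ is under many-one reductions. Because we use Toda's theorem only to obtain the \emph{class} inclusion $\complexityclass{PH} \subseteq \existsclass{\PP}$ and then apply many-one completeness \emph{inside} $\existsclass{\PP}$, the composition produces many-one hardness with no difficulty; one should merely state explicitly which reduction class is meant so that the corollary is not misread as a Turing-hardness claim.
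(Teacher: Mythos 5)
Your proof is correct and takes essentially the same route as the paper: Toda's theorem gives $\complexityclass{PH} \subseteq \oracleclass{\Pclass}{\sharpP} \subseteq \oracleclass{\NP}{\sharpP} = \existsclass{\PP}$ (the last equality by Theorem~2), and hardness then follows from the $\existsclass{\PP}$-completeness of $\problem{DMAX\#SAT}$ (Theorem~1). Your explicit attention to which reduction notion is in play merely spells out what the paper's terser proof leaves implicit.
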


\begin{proof}
  Obviously, $\oracleclass{\Pclass}{\sharpP} \subseteq \oracleclass{\NP}{\sharpP}$, and the former class is hard for the polynomial hierarchy by Toda's theorem.
\end{proof}

\begin{remark}
  The above lemmas and theorems consider quantifier-free propositional formulas over a number of free variables.
  In fact, we can use arbitrary predicates with a certain number of free variables: for instance, take inputs $x_1,\dots,x_n$, and return true or false depending on whether a certain polynomial-time nondeterministic Turing machine parameterized by $x_1,\dots,x_n$ has an accepting run or not.
  Equivalently, we could consider predicates of the form $\exists z_1,\dots,z_p~ F(x_1,\dots,x_n, z_1,\dots,z_p)$.
  We then obtain the result
  $\existsclass{\majorityclass{\NP}} = \oracleclass{\NP}{\counting{\NP}}$.

  By going with classes of predicates arbitrarily high in the counting hierarchy, we obtain the general theorem \cite[theorem 4.1 (ii)]{DBLP:journals/jacm/Toran91}: for any class $K$ is the counting hierarchy,
  $\existsclass{\majorityclass{K}} = \limitedoracleclass{\NP}{\majorityclass{K}}{1} = \oracleclass{\NP}{\majorityclass{K}} = \oracleclass{\NP}{\counting{K}}$.
  
  This result is not limited to the counting hierarchy, it is appropriate for classes of predicates stable by certain operations (conjunction, disjunction, conjunction with extra propositional inputs\dots).
\end{remark}

\printbibliography
\end{document}